\title{Rateless Codes for Single-Server Streaming to Diverse Users}
\author{
\authorblockN{Yao Li}
\authorblockA{ECE Department, Rutgers University\\
Piscataway NJ 08854\\
yaoli@winlab.rutgers.edu}
\and
\authorblockN{Emina Soljanin}
\authorblockA{Bell Labs, Alcatel-Lucent\\
Murray Hill NJ 07974, USA\\
emina@alcatel-lucent.com}
\thanks{
This work was supported by the NSF grant No.~CNS 0721888.}
 }
\begin{document}
\maketitle
\begin{abstract}
We investigate the performance of rateless codes for single-server
streaming to diverse users,  assuming that diversity in users is
present not only because they have different channel conditions, but
also because they demand different amounts of information and have
different decoding capabilities. The LT encoding scheme is employed.
While some users accept output symbols of all degrees and decode
using belief propagation, others only collect degree-1 output
symbols and run no decoding algorithm. We propose several
performance measures, and optimize the performance of the rateless
code used at the server through the design of the code degree
distribution. Optimization problems are formulated for the
asymptotic regime and solved as linear programming problems.
Optimized performance shows great improvement in total bandwidth
consumption over using the conventional ideal soliton distribution,
or simply sending separately encoded streams to different types of
user nodes. Simulation experiments confirm the usability of the
    optimization results obtained for the asymptotic regime as a guideline for finite-length code design.
\end{abstract}
\section{Introduction}
\subsection{Motivation}

Growing popularity of ubiquitous computing, along
with the surging demand for digital media distribution services such
as YouTube\texttrademark, has brought up the issue of efficient
media sharing in a heterogenous network composed of links of
diverse quality as well as terminals of varied computing power and
demand of media quality.

Consider the air broadcast of digital TV streams. A specialized
``plugged'' receptor, such as an HDTV set at home, may have more
computing power than a small portable device, such as a cellphone,
and hence the former might be able to perform more complex decoding
algorithms than the latter. Meanwhile, the quality of the broadcast
channels may vary due to the location of the receiver, indoors or
outdoors, near or far from the transmitting tower. Moreover, devices
may need different amounts of data to display a video stream
according to screen resolutions.

Here, we are interested in finding some efficient and yet fair way
to provide multicast streaming service to all or a majority of the
receivers bearing such heterogeneity. One straightforward solution
is to transmit separately encoded data streams suitable for
different devices and channels simultaneously, but this requires
extra bandwidth and is hence less than efficient.

Rateless codes \cite{luby,amin} are, roughly speaking, designed for erasure channels in
a way that the set of information symbols may be recovered from any
subset of the encoding symbols of size equal or slightly larger than
that of the information symbol set by simple decoding. The first practical rateless
codes, LT codes, were invented by Michael Luby and published in
2002 \cite{luby}. Another class of rateless codes are Raptor codes, a
version of which has been written into the 3GPP standard for Multimedia Broadcast/Multicast Service \cite{3GPP}.

Rateless codes have the nice features of requiring minimal feedback
from the receiver to the sender and operating well over a range of
channel conditions. These features are particularly suitable for the
broadcast/multicast scenarios.  We investigate the possibility of
simultaneously serving data sinks of highly heterogenous decoding
capabilities and non-uniform demand of information on channels of
diverse quality, with a single rateless coded multicast stream from
the source. Specifically, we study the design of LT codes for the
multicast streaming purpose.

\subsection{Related Work}

The performance of LT codes is determined by the degree distribution
of encoding/output symbols. In \cite{luby} and \cite{amin}, the
ideal soliton and robust soliton degree distributions have been
proposed for minimizing the overhead necessary for recovering all
input symbols. However, using these degree distributions when the
number of output symbols collected by the receiver is smaller than
the total number of the input symbols results in recovery of few
input symbols. In \cite{sanghavi}, the optimal degree distributions
for recovering a constant fraction of the input symbols from the
smallest number of output symbols have been studied.

Our work considers multicast streaming to all user nodes with a
single data stream. We deal with simultaneous multiple
heterogeneities such as link diversity, difference in coding
capabilities (e.g., due to limitations in computing resources), and
difference volume of information demand (e.g., low or high
resolution video). We are interested in performance measures
reflecting the collective properties of all the sink nodes of
interest, such as maximum and average latency. Our approach by
designing

Our paper is organized as follows: Section \ref{sec:model}
introduces the system model for the heterogeneous multicasting
network. Section \ref{sec:theory} outlines the guidelines for our
optimization problems in the asymptotic regime. Section
\ref{sec:probs} proposes several performance measures and states the
corresponding optimization problems. Section \ref{sec:results}
presents the optimization results of the problems formulated in
section \ref{sec:probs}. Section \ref{sec:simulation} contains finite-length
simulation results.

\section{System Model: Multicast Over BEC Channels} \label{sec:model}

We consider a streaming network consisting of a single server
(source node) and $n$ users (sink nodes) each directly connected to
the server by a BEC channel, as shown in Figure
~\ref{fig:broadcastmodel}.
\begin{figure}[h]
\centering
\includegraphics[scale=0.3]{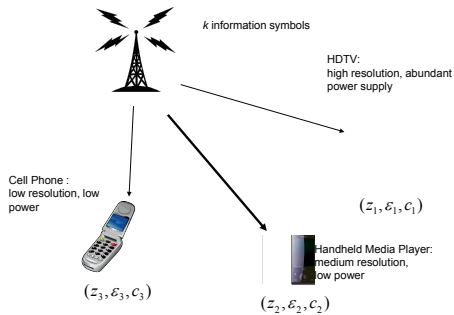}
\vspace{-.3in} \caption{Broadcast/Multicast System
Model}\label{fig:broadcastmodel}
\end{figure}
The source holds $k$ information symbols and broadcasts a rateless
coded stream to all $n$ sinks. The rateless encoder is an LT
encoder\cite{luby} with degree distribution with moment generating
function
\begin{equation}
P^{(k)}(x)=p_1^{(k)}x+p_2^{(k)}x^2+\dots+p_k^{(k)}x^k.
\label{eq:mgf}
\end{equation}
The LT encoder generates potentially an infinite number of output
symbols and broadcast the output stream along all BEC links.

There are two types of sink nodes which differ in the way the LT
code is decoded. One type of sinks use the belief propagation (BP)
algorithm \cite{luby} to recover the input symbols from the received
output symbols, while the other type of sinks only accept and
recover information from degree-1 output symbols received from the
source. The first type are referred to as decoding, and the second
as non-decoding sinks. When multiple description \cite{MDS} encoded,
the information symbols allow for  tiered reconstruction qualities
of the original source information at the sinks.

Sinks are sorted into $1\le l\le n$ clusters, each cluster
comprising $n_i$ $(i=1,2,\ldots,l)$ sinks. $n=\sum_{i=1}^{l}n_i$. A
sink in cluster $i$ is characterized by a tuple
$(z_i,c_i,\epsilon_i)$ $(i=1,2,\ldots,l)$. $z_i$ is a real {\it
constant} in $[0,1)$ indicating the fraction of input symbols that
sinks in cluster $i$ expect to recover. $z_i$ could be related to
the target distortion at the sinks. The two types of sink nodes are
distinguished by $c_i=\mathds{1}_{\{\mbox{cluster $i$ is
decoding}\}}$. $\epsilon_i$ is the erasure rate of the BEC channels
that link the source node to the sink nodes in cluster $i$.
Depending on the performance measure, sinks in the same cluster can
often be treated as one single sink because the tuples fully
characterize their decoding behavior in this broadcasting scenario.

\section{The Optimization Problem in the Asymptotic Regime}  \label{sec:theory}

The decoding process of LT codes starts with simply recovering the
input symbols connected to the received output symbols of degree-1.
This initial recovery induces a new set of output symbols of
degree-1. The decoding can continue in the same manner as long as
there are output symbols of induced degree-1. Such symbols
constitute what is known as the {\it ripple}. The decoding process
halts when the ripple becomes empty. In \cite{amin,karp} and
\cite{2nd_moment}, the expected size of the ripple throughout of the
decoding process is given as a function of the number of unrecovered
information symbols. We restate here the part of Theorem 2 in
\cite{2nd_moment} that concerns the expected size of the ripple.

Assume $w\cdot k$ output symbols have been collected and can be used
for decoding of an LT code, for some positive constant $w$. Let
$u\cdot k$ be the number of unrecovered information symbols, for a
constant $u\in[0,1)$. Let $r^{(k)}(u)$ be the expected size of the
ripple, normalized by $k$.

\newtheorem{theorem}{Theorem}
\begin{theorem}(Maatouk and Shokrollahi~\cite[Thm.~2]{2nd_moment})\label{thm:ripplesize}
If an LT code of $k$ information symbols has degree distribution
specified by the moment generating function $P^{(k)}(x)$ (see
(\ref{eq:mgf})), then
\begin{equation}
r^{(k)}(u)=wu\Bigl(P^{(k)\prime}(1-u)+\frac{1}{w}\ln
u\Bigr)+\mathcal{O}\Bigl(\frac{1}{k}\Bigr),
\end{equation}
where $P^{(k)\prime}(x)$ stands for the first derivative of
$P^{(k)}(x)$ with respect to $x$.
\end{theorem}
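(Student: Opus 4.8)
The plan is to treat LT decoding as a discrete-time stochastic process indexed by the number of recovered input symbols and to track the expected ripple size through its one-step drift, which I will then pass to a continuous (fluid-limit) ordinary differential equation in $u$. Writing $P$ for $P^{(k)}$ and abbreviating $r=r^{(k)}$, with $R(u)=r(u)\,k$ the expected ripple size when $uk$ inputs remain unrecovered, the first observation is a bookkeeping identity: at every stage the ripple consists exactly of those output symbols whose reduced degree (number of still-unrecovered neighbours) equals one. Indeed, a one-step case analysis shows that recovering one input removes the processed symbol and every other ripple symbol that covered the same input (their reduced degree drops to zero), and promotes into the ripple precisely the reduced-degree-two symbols that covered that input; matching this against the change in the reduced-degree-one population confirms the identity is preserved step by step. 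This reduces the problem to computing the expected increments of that population.

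Next I would compute the expected change of $R$ over the single step in which one input is recovered, i.e.\ $u$ decreases by $1/k$. To leading order in $k$: the expected number of newly released symbols is $\gamma = wu\,P''(1-u)$ (the reduced-degree-two symbols number $wk\frac{u^2}{2}P''(1-u)$ and each covers the recovered input with probability $\approx 2/(uk)$); exactly one processed symbol leaves; and the expected number of redundant departures is $\delta \approx r(u)/u$ (each remaining ripple symbol covers the recovered input with probability $\approx 1/(uk)$). Hence $\mathbb{E}[\Delta R]=wu\,P''(1-u)-1-r(u)/u+\mathcal{O}(1/k)$, and since one step changes $u$ by $-1/k$ while changing $R$ by $\mathbb{E}[\Delta R]$, dividing gives the linear ODE
\[
\frac{dr}{du}=1+\frac{r}{u}-wu\,P''(1-u).
\]

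Solving this is routine: multiplying by the integrating factor $1/u$ turns the left side into $\frac{d}{du}(r/u)$ and the right side into $1/u-wP''(1-u)$, whose antiderivative is $\ln u + wP'(1-u)$. The initial condition is $r(1)=wP'(0)$, since at the start the ripple is exactly the collected degree-one output symbols; fixing the constant this way yields $r(u)=wu\,P'(1-u)+u\ln u$, which is the claimed $wu\bigl(P^{(k)\prime}(1-u)+\frac{1}{w}\ln u\bigr)$. As a sanity check, differentiating the claimed expression reproduces the ODE exactly, and the $u\ln u$ term (negative on $(0,1)$) correctly reflects that the decoder preferentially clears well-covered inputs, leaving fewer reduced-degree-one symbols than a uniformly random recovered set would predict.

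The main obstacle is rigour rather than algebra: justifying the leading-order drift expressions and the passage to the ODE with an honest $\mathcal{O}(1/k)$ error. Two issues must be controlled. First, the recovered set is not a uniformly random subset and the input being recovered is biased toward high coverage, so the counts of reduced-degree-two symbols and of redundant coverings must be shown to concentrate around their mean-field values; this requires tracking the whole profile of reduced-degree populations, arguing they evolve as a closed deterministic system, and supplying a second-moment (or Azuma/Wormald differential-equation-method) bound on the fluctuations. Second, edges are placed without replacement, so the independent-erasure approximations used for $\gamma$ and $\delta$ carry lower-order corrections that must be absorbed into the error term. This concentration analysis is exactly the technical core supplied by Maatouk and Shokrollahi~\cite{2nd_moment}, and it is where essentially all the work lies.
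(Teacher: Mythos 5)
The paper itself contains no proof of this theorem: it is restated from Maatouk and Shokrollahi \cite[Thm.~2]{2nd_moment}, with the authors only remarking that the cited proof extends to $w<1$. So the only meaningful comparison is against that cited work, and there your route genuinely differs. Your mean-field derivation is correct as far as the algebra goes: the drift terms (releases $wu\,P''(1-u)$, the one processed symbol, redundant departures $r/u$), the resulting ODE $\frac{dr}{du}=1+\frac{r}{u}-wu\,P''(1-u)$, the integrating factor, and the initial condition $r(1)=wP'(0)$ are all right, and the solution $wu\,P'(1-u)+u\ln u$ is exactly the claimed expression (a good extra check: for the ideal soliton $P'(1-u)\approx-\ln u$, recovering the classical vanishing-ripple property). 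The cited source, however, does not argue via a Wormald-style fluid limit; following Karp, Luby and Shokrollahi's finite-length analysis, it tracks the decoder state with exact difference equations (state generating functions) under the principle of deferred decisions and then extracts the asymptotics with the $\mathcal{O}(1/k)$ error. This distinction matters for your closing paragraph, where you somewhat misidentify the technical core: for a statement about the \emph{expectation} alone, concentration (Azuma/Wormald) is not the crux. Since exactly one input symbol is recovered per decoding step, $u$ is deterministic at each step, and by deferred decisions the conditional one-step drift is \emph{linear} in the state variables (ripple size, number of reduced-degree-2 symbols, and so on up the reduced-degree profile), so taking expectations closes the recursion exactly with no fluctuation analysis needed; the real work is solving that exact recursion to within $\mathcal{O}(1/k)$, absorbing the without-replacement corrections, and handling the event that the ripple empties early, which makes ``the step at which $uk$ inputs remain unrecovered'' undefined on part of the probability space. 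Concentration is what Maatouk and Shokrollahi need for their second-moment and deviation results, not for this first-moment formula.
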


The original theorem in \cite{2nd_moment} is stated for the case
where the number of output symbols collected by the receiver is more
than the total number of information symbols, i.e., $w>1$. However,
the proof suggests that the theorem also holds for any constant
$w<1$.

Assume that $P^{(k)}(x)$ converges to $P(x)=\sum_{i\ge1}p_ix^i$ as
$k\rightarrow\infty$; then we have
\begin{equation}
r(u)=\lim_{k\rightarrow\infty}r^{(k)}(u)= u\bigl(w P'(1-u)+\ln
u\bigr). \label{aymp_recovery}
\end{equation}

In order for the decoding process to carry on until at least a
fraction $z$ of the information symbols could be recovered, the
ripple size has to be kept positive. If we use the expected value to
roughly estimate the ripple size, we should have
\begin{equation*}
r(u)=u\bigl(w P'(1-u)+\ln u\bigr)>0, \quad\forall u\in(1-z,1],
\end{equation*}
or equivalently,
\begin{equation}
w P'(1-u)+\ln u>0, \quad\forall u\in(1-z,1],
\label{eqn:recovery_condition}
\end{equation}
Inequality (\ref{eqn:recovery_condition}) provides a guideline for
the design of the degree distribution $P(x)$.

It is interesting to consider the implications of inequality
(\ref{eqn:recovery_condition}) on $w$ and $z$ relationship when the
degree distribution is $p_1=1$, that is, all output symbols are of
degree 1. Then (\ref{eqn:recovery_condition}) should tell us how
many (on the average) output symbols of degree 1 we need in order to
recover fraction $z$ of the information symbols. Note that when
$p_1=1$ we have $P(x)=x$ and $P'(x)=1$, and in turn from
(\ref{eqn:recovery_condition}), we have $w+\ln u>0$, $\forall
u\in(1-z,1]$. Thus, $w\ge-\ln(1-z)$, and consequently, the optimal
value of $w$ is $-\ln(1-z)$.

Note that we would get the same result if we tried to answer the
question about $w$ and $z$ by using the coupon collecting problem,
also known as the urns-and-balls problem. Throw a number of balls
into $k$ urns. Each ball is thrown independently and falls into each
urn with equal probability. What is the number of balls $N$ needed
for the number of urns containing at least one ball to reach $s$?
Note that $N$ is a random variable. It has been derived in
\cite[Ch.~2]{feller} (see also \cite{monograph}) that the expected
number of $N$ is
\begin{eqnarray*}
\mathbb{E}[N]&=&k\Bigl(\frac{1}{k}+\frac{1}{k-1}+\dots+\frac{1}{k-s+1}\Bigr)\\
&\gtrapprox& k\ln\frac{k}{k-s+1}=-k\ln\Bigl(1-\frac{s-1}{k}\Bigr) .
\end{eqnarray*}
Set $z=s/k$, the portion of urns possessing at least one ball. Then,
as $k\rightarrow \infty$, $\mathbb{E}[N]\rightarrow-k\ln(1-z)$.

Now, assume that the number of collected output symbols of the LT
code specified in Theorem \ref{thm:ripplesize} is $W\cdot k$, where
$W$ is a random variable with mean $\omega$, and denote the
normalized expected ripple size as $k\rightarrow\infty$ as $r_W(u)$,
then

\newtheorem{corollary}[theorem]{Corollary}
\begin{corollary}
\begin{equation}
r_W(u)=u\Bigl(\omega P^{\prime}(1-u)+\ln u\Bigr).
\label{eqn:randomchannel}
\end{equation}
\end{corollary}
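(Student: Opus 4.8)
The plan is to prove the corollary by conditioning on the realization of $W$ and appealing to Theorem~\ref{thm:ripplesize}, exploiting the fact that the ripple-size expression is linear in the number-of-symbols parameter. The randomness governing $W$ (how many coded symbols survive the channel) is independent of the randomness in the LT encoding and belief-propagation decoding, so the tower property of conditional expectation applies cleanly.

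First I would fix a value $W=w>0$ and invoke Theorem~\ref{thm:ripplesize}. Rewriting the conclusion of that theorem by distributing the factor $wu$, the conditional normalized expected ripple size is
\begin{equation*}
\mathbb{E}\bigl[\text{ripple}/k \mid W=w\bigr]
= u\bigl(w\,P^{(k)\prime}(1-u)+\ln u\bigr)+\mathcal{O}(1/k).
\end{equation*}
The crucial observation is that $w$ now enters only through the single linear term $u\,w\,P^{(k)\prime}(1-u)$, while $u\ln u$ and the error term do not depend on $w$.

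Next I would take the expectation over $W$ using the tower property. By linearity of expectation, the parameter $w$ is replaced by its mean $\mathbb{E}[W]=\omega$, giving
\begin{equation*}
r_W^{(k)}(u)
= u\bigl(\omega\,P^{(k)\prime}(1-u)+\ln u\bigr)+\mathcal{O}(1/k).
\end{equation*}
Letting $k\rightarrow\infty$ and using $P^{(k)}(x)\to P(x)$ together with the vanishing of the error term then yields $r_W(u)=u\bigl(\omega P'(1-u)+\ln u\bigr)$, as claimed.

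The step I expect to require the most care is the interchange of the expectation over $W$ with the $\mathcal{O}(1/k)$ remainder: to pass this term outside the expectation and still have it vanish, one needs the bound in Theorem~\ref{thm:ripplesize} to hold uniformly over the support of $W$ (or at least for its error term to have expectation $\mathcal{O}(1/k)$). Provided $W$ is supported on positive values bounded away from $0$ --- so that the hypotheses of the theorem apply to every realization --- this is routine, but it is the one place where a nontrivial regularity assumption on the distribution of $W$ is implicitly used.
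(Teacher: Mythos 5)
Your proof follows essentially the same route as the paper's: the paper likewise obtains the corollary by writing $r_W(u)=E\bigl[Wu\bigl(P'(1-u)+\tfrac{1}{W}\ln u\bigr)\bigr]$ and invoking the linearity of the expected ripple size in $W$ for fixed $u$ and $P$, which is exactly your conditioning/tower-property step. The only difference is one of ordering --- you take the expectation at finite $k$ and then let $k\rightarrow\infty$, carefully flagging the uniformity needed to control the $\mathcal{O}(1/k)$ term under the expectation --- whereas the paper applies the expectation directly to the already-asymptotic expression and passes over that interchange silently, so your version is, if anything, the more rigorous rendering of the same argument.
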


\begin{proof}
This is due to the linearity of the expected ripple size in $W$ for
given $u$ and $P$.
\begin{eqnarray*}
r_W(u) &=&E\left[Wu\Bigl(P^{\prime}(1-u)+\frac{1}{W}\ln
u\Bigr)\right]\\
&=&u\Bigl(E[W] P^{\prime}(1-u)+\ln
u\Bigr)\\
&=&u\Bigl(\omega P^{\prime}(1-u)+\ln u\Bigr)
\end{eqnarray*}
\end{proof}

Then, from \ref{aymp_recovery}, we have the recovery condition for
random $W$ with mean $\omega$
\begin{equation}
\omega P'(1-u)+\ln u>0, \quad\forall u\in(1-z,1],
\label{eqn:rand_recovery_condition}
\end{equation}

In the next section, we shall use (\ref{eqn:recovery_condition}) to
formulate our optimization problems for LT code degree distribution
design.

\section{Performance Measures and Their Optimization Problem Statements}    \label{sec:probs}

Recall from Section \ref{sec:model} tuples $(z_i,c_i,\epsilon_i)$,
$i=1,2,\ldots,l$ are used to characterize the $l$ sink clusters in
the streaming network. Let $t_i\cdot k$ be the number of output
symbols transmitted by the source up till the time when the sinks in
cluster $i$ are able to recover their targeted fraction $z_i$ of the
input symbols. Then, the normalized number of symbols a sink in
cluster $i$ receives has mean $t_i(1-\epsilon)$.

If cluster $i$ is decoding($c_i=1$), then let $x=1-u$ in
(\ref{eqn:rand_recovery_condition}); we have
\begin{equation}
(1-\epsilon_i)t_iP'(x)+\ln(1-x)>0, \quad\forall x\in[0,z_i).
\label{eqn:rec_dec}
\end{equation}

A non-decoding user recovering information from a rateless coded
stream of degree distribution specified by $P(x)$ is equivalent to a
decoding user recovering information from a coded stream of degree
distribution specified by $P_0(x)=(1-P'(0))+P'(0)x$.

If cluster $i$ is non-decoding ($c_i=0$), then let $p_1=P'(0)$, the
fraction of degree-1 symbols and we have
\begin{equation}
(1-\epsilon_i)t_ip_1+\ln(1-x)>0, \quad\forall x\in[0,z_i).
\label{eqn:rec_nondec}
\end{equation}

The monotonicity and continuity of the $\ln$ function simplify
(\ref{eqn:rec_nondec}) to
\begin{equation}
(1-\epsilon_i)t_ip_1+\ln(1-z_i)\ge 0. \label{eqn:rec_nondec_sim}
\end{equation}

\paragraph{Min-Max Latency}
In the interest of the transmitting source, we wish to minimize the
transmission time that could guarantee the recovery of targeted
$(z_1,z_2,\ldots,z_l)$ fractions of input symbols by the $l$ sink
clusters. In addition, for broadcasting time-sensitive streaming
data, new data await to be transmitted after the transmission of an
older block of data is finished. Minimizing the maximum latency is
especially important for keeping the entire communications scheme in
pace.

This optimization problem could be expressed as follows: {\small
\begin{eqnarray}
&\mbox{min.}_{P}& \max_{i}\ t_i    \label{opt_gen_lat_fair}\\
&\mbox{s.t.}&t_i(1-\epsilon_i)P'(x)+\ln(1-x) > 0, \quad 0\le x\le z_i, \notag\\&&\quad\quad\mbox{if cluster $i$ is decoding},i=1,2\ldots,l,\notag\\
&&t_i(1-\epsilon_i)p_1+\ln(1-z_i)\ge0, \notag\\&&\quad\quad\mbox{if
cluster $i$ is non-decoding},i=1,2\ldots,l,\notag
\end{eqnarray}}
or equivalently, {\small
\begin{eqnarray}
&\mbox{min.}_{P,t_0}& t_0    \label{opt_gen_lat_fair0}\\
&\mbox{s.t.}&t_0(1-\epsilon_i)P'(x)+\ln(1-x) > 0, 0\le x\le z_i, \notag\\&&\quad\quad\mbox{if cluster $i$ is decoding},i=1,2\ldots,l,\notag\\
&&t_0(1-\epsilon_i)p_1+\ln(1-z_i)\ge0, \notag\\&&\quad\quad\mbox{if
cluster $i$ is non-decoding},i=1,2\ldots,l.\notag
\end{eqnarray}}

Let $t_0^*(z_1, z_2,\ldots,z_l)$ be the optimal solution to Problem
(\ref{opt_gen_lat_fair0}). Then the achievable information recovery
region for transmission of $t\cdot k$ output symbols is given by
\begin{eqnarray*}
Z(t)&=&\{(z_1,z_2,\ldots,z_l): \\
&&t_0^*(z_1,z_2,\ldots,z_l)\le t,\\&&z_i\in[0,1), i=1,2,\ldots,l\}.
\end{eqnarray*}

As we will see in the next section, optimization results show that,
when there are two decoding clusters in the network, one with
perfect link conditions and the other with erasure rate
$\epsilon=0.5$, after the source has transmitted $1.6k$ output
symbols, the cluster with worse channels can recover $63\%$ of the
input symbols in the mean time when the cluster with perfect
channels can recover $95\%$. If the source uses ideal soliton or
robust soliton distributions, however, the cluster with worse
channels may hardly recover anything until about $2k$ output symbols
have been transmitted. Similar results can be seen for cases where
there is one decoding cluster and a non-decoding cluster present in
the network.

\paragraph{Max-Min Channel Utilization}
 The Shannon capacity of the BEC link to sink
cluster $i$ is $(1-\epsilon_i)$ bits per channel use. The channel
utilization of a link to cluster $i$ is then
$v_i=\frac{z_i}{(1-\epsilon_i)t_i}$. We wish to maximize the minimum
channel utilization on all links, which is equivalent to minimizing
the inverse of the channel utilization. {\small
\begin{eqnarray}
&\mbox{min.}_{P}& \max_{i}\frac{t_i(1-\epsilon_i)}{z_i}    \label{opt_gen_uti_fair}\\
&\mbox{s.t.}&t_i(1-\epsilon_i)P'(x)+\ln(1-x)\ge0, 0\le x\le z_i, \notag\\&&\quad\quad\mbox{if cluster $i$ is decoding},i=1,2\ldots,l,\notag\\
&&t_i(1-\epsilon_i)p_1+\ln(1-z_i)\ge0,\notag\\&&\quad \quad\mbox{if
cluster $i$ is non-decoding},i=1,2\ldots,l,\notag
\end{eqnarray}}

or equivalently, {\small
\begin{eqnarray}
&\mbox{min.}_{P,v_0}& v_0    \label{opt_gen_uti_fair0}\\
&\mbox{s.t.}&v_0z_iP'(x)+\ln(1-x)\ge0, \quad 0\le x\le z_i, \notag\\&&\quad\quad\mbox{if cluster $i$ is decoding},i=1,2\ldots,l,\notag\\
&&v_0z_ip_1+\ln(1-z_i)\ge0, \notag\\&&\quad\quad\mbox{if cluster $i$
is non-decoding},i=1,2\ldots,l.\notag
\end{eqnarray}}

Maximizing the min channel utilization proves to be irrelevant to
the channel conditions, as may be inferred from the expression of
Problem (\ref{opt_gen_uti_fair0}). As we will see in the next
section, high minimum channel utilization could be achieved when the
decoding cluster has either a very low or a very high demand. The
increase in the demand of the non-decoding cluster, on the other
hand, always degrades channel utilization.

\paragraph{Max-Min Throughput}
The throughput at each sink cluster $i$ may be defined as
$\frac{z_i}{t_i}$. It is of interest to measure the objective
channel degradation regardless of channel capacity so as to provide
reference for service pricing of the broadcast application. We wish
to maximize the minimum throughput of all sink clusters. This is
equivalent to minimizing the maximum of the inverse of the
throughput. The optimization problem is therefore expressed as
Problem (\ref{opt_gen_throu_fair}):

{\small
\begin{eqnarray}
&\mbox{min.}_{P}& \max_{i}\frac{t_i}{z_i}    \label{opt_gen_throu_fair}\\
&\mbox{s.t.}&t_i(1-\epsilon_i)P'(x)+\ln(1-x)\ge0, 0\le x\le z_i, \notag\\&&\quad\quad\mbox{if cluster $i$ is decoding},i=1,2\ldots,l,\notag\\
&&t_i(1-\epsilon_i)p_1+\ln(1-z_i)\ge0,\notag\\&&\quad \quad\mbox{if
cluster $i$ is non-decoding},i=1,2\ldots,l,\notag
\end{eqnarray}}

or equivalently, {\small
\begin{eqnarray}
&\mbox{min.}_{P,w_0}& w_0    \label{opt_gen_throu_fair0}\\
&\mbox{s.t.}&w_0z_i(1-\epsilon_i)P'(x)+\ln(1-x)\ge0, \quad 0\le x\le z_i, \notag\\&&\quad\quad\mbox{if cluster $i$ is decoding},i=1,2\ldots,l,\notag\\
&&w_0z_i(1-\epsilon_i)p_1+\ln(1-z_i)\ge0,\notag\\&&\quad
\quad\mbox{if cluster $i$ is non-decoding},i=1,2\ldots,l.\notag
\end{eqnarray}}

\paragraph{Minimum Average Latency} We are also interested in minimizing the average
latency of all sinks. This is a natural measure of overall
performance.

\begin{eqnarray}
&\mbox{min.}_{P}& \frac{\sum_{i=1}^{l}n_it_i}{n}   \label{opt_gen_avg_lat}\\
&\mbox{s.t.}&t_i(1-\epsilon_i)P'(x)+\ln(1-x)\ge0, 0\le x\le z_i, \notag\\&&\quad\quad\mbox{if cluster $i$ is decoding},i=1,2\ldots,l,\notag\\
&&t_i(1-\epsilon_i)p_1+\ln(1-z_i)\ge0,\notag\\&&\quad\quad\mbox{if
cluster $i$ is non-decoding},i=1,2\ldots,l.\notag
\end{eqnarray}

Optimization results show that, when all channels are perfect and
half of the sinks are decoding, half non-decoding, the optimized
achievable average latency with one single broadcast data stream is
mostly worse than broadcasting on separate channels data streams
individually optimized for different sinks. Details are presented in
the Section \ref{sec:results}.

Since our objectives are the minimization of increasing functions of
the latencies, with arguments similar to Lemma 2 of \cite{sanghavi},
we can claim that there must exist optimal solutions to Problems
(\ref{opt_gen_lat_fair0}), (\ref{opt_gen_uti_fair0}),
(\ref{opt_gen_throu_fair0}) and (\ref{opt_gen_avg_lat}) with
polynomials $P(x)$ of degree no higher than
$d_{\max}=\lceil\frac{1}{1-\max_i\{z_i\}}\rceil-1$. This promises
the ready conversion of Problems (\ref{opt_gen_lat_fair0}),
(\ref{opt_gen_uti_fair0}) and (\ref{opt_gen_throu_fair0}) into
linear programming problems by the method proposed in
\cite{sanghavi}. Problem (\ref{opt_gen_avg_lat}) may be converted to
a series of linear programming problems for fixed $p_1\in[0,1]$ when
there are only two sink clusters in the network, one decoding and
the other non-decoding. To solve the linear programming problems
numerically, the parameter $x$ in the constraints is evaluated at
discrete points and lower bounds for the minimization problems with
constraints continuous in $x$ are obtained. In the next section we
will show in detail the optimization results of these problems.

\section{Optimization Results}      \label{sec:results}
\subsection{Application to 2-Cluster Situations}

Now we apply our optimization problem to the case where only two
sink clusters with distinct tuple characteristics,
$(z_1,c_1,\epsilon_1)$ and $(z_2,c_2,\epsilon_2)$ exist. We deal
with:
(1) $c_1=c_2=1$, $\epsilon_1=0, \epsilon_2=0.5$: both clusters are
decoding, but with diverse channel conditions;
(2) one cluster is decoding while the other is not, with equal or
diverse channel qualities.

Figure \ref{fig:fairlatencyregion} shows the contour graphs of the
outer bounds of the min-max latency on the $z_1-z_2$ plane for four
typical cases.

\begin{figure}[htbp]%
\centering
\subfigure[]{\label{subfig:fairlata}\includegraphics[scale=0.45]{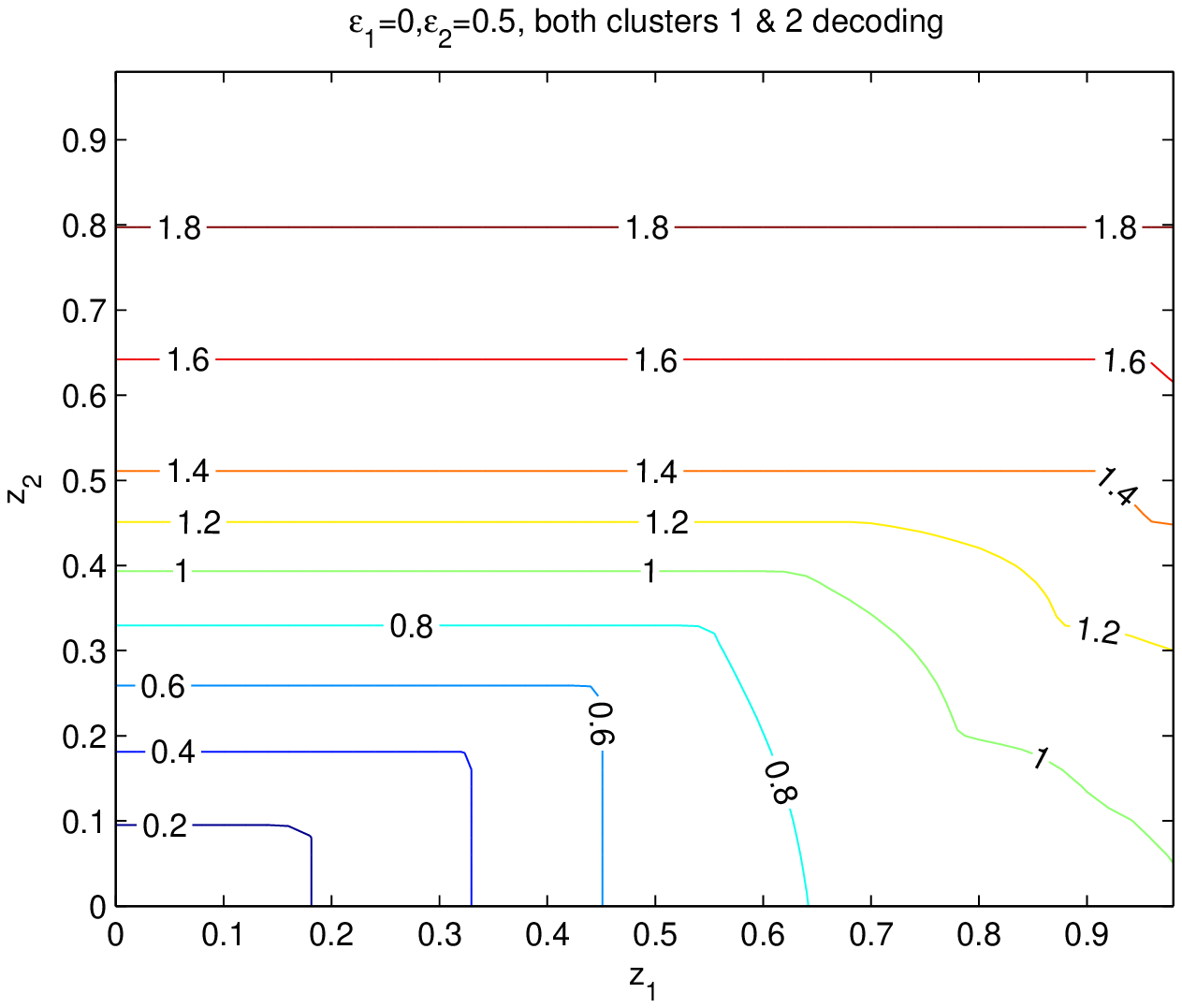}}\qquad
\subfigure[]{\label{subfig:fairlatb}\includegraphics[scale=0.45]{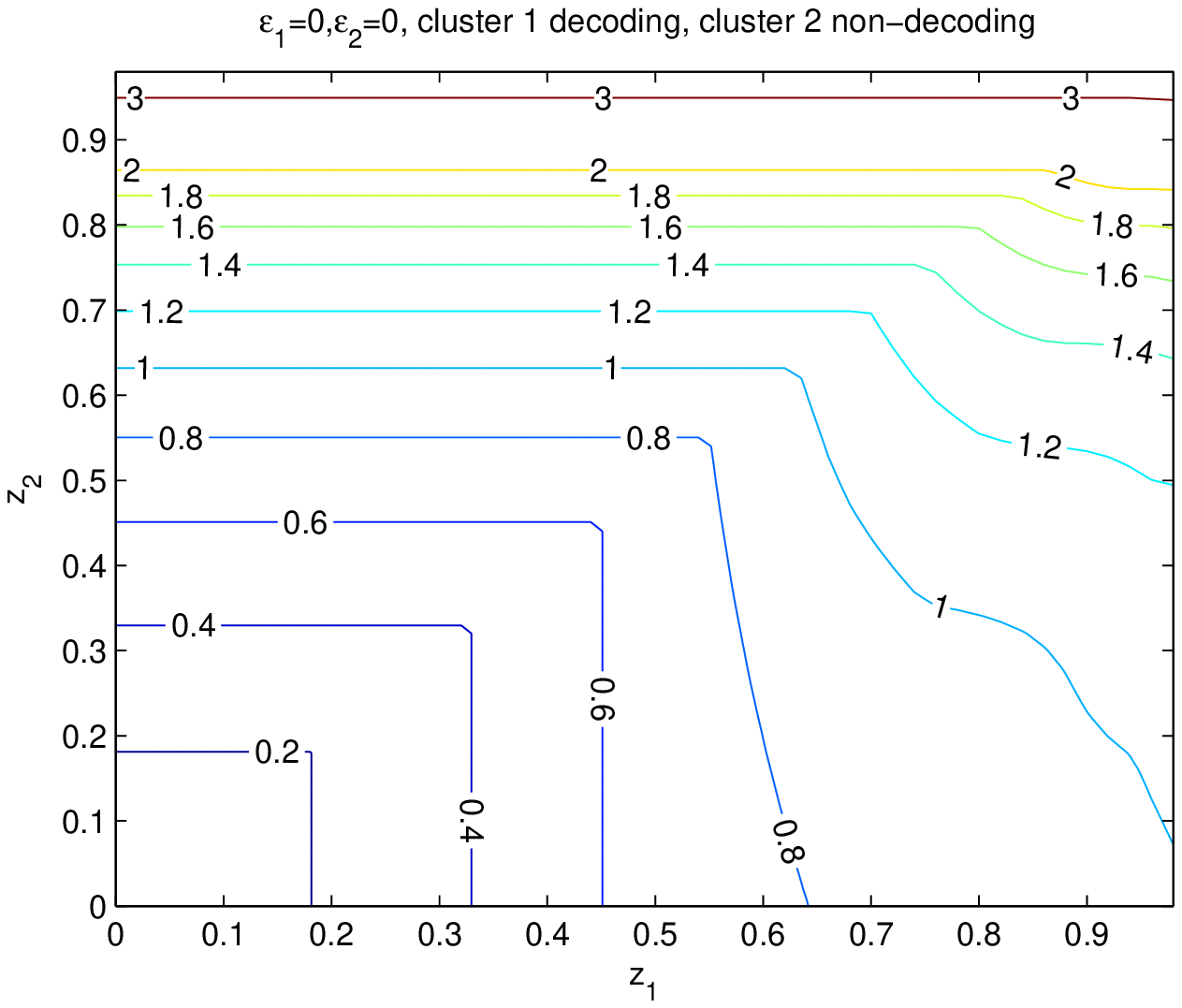}}\\
\subfigure[]{\label{subfig:fairlatd}\includegraphics[scale=0.45]{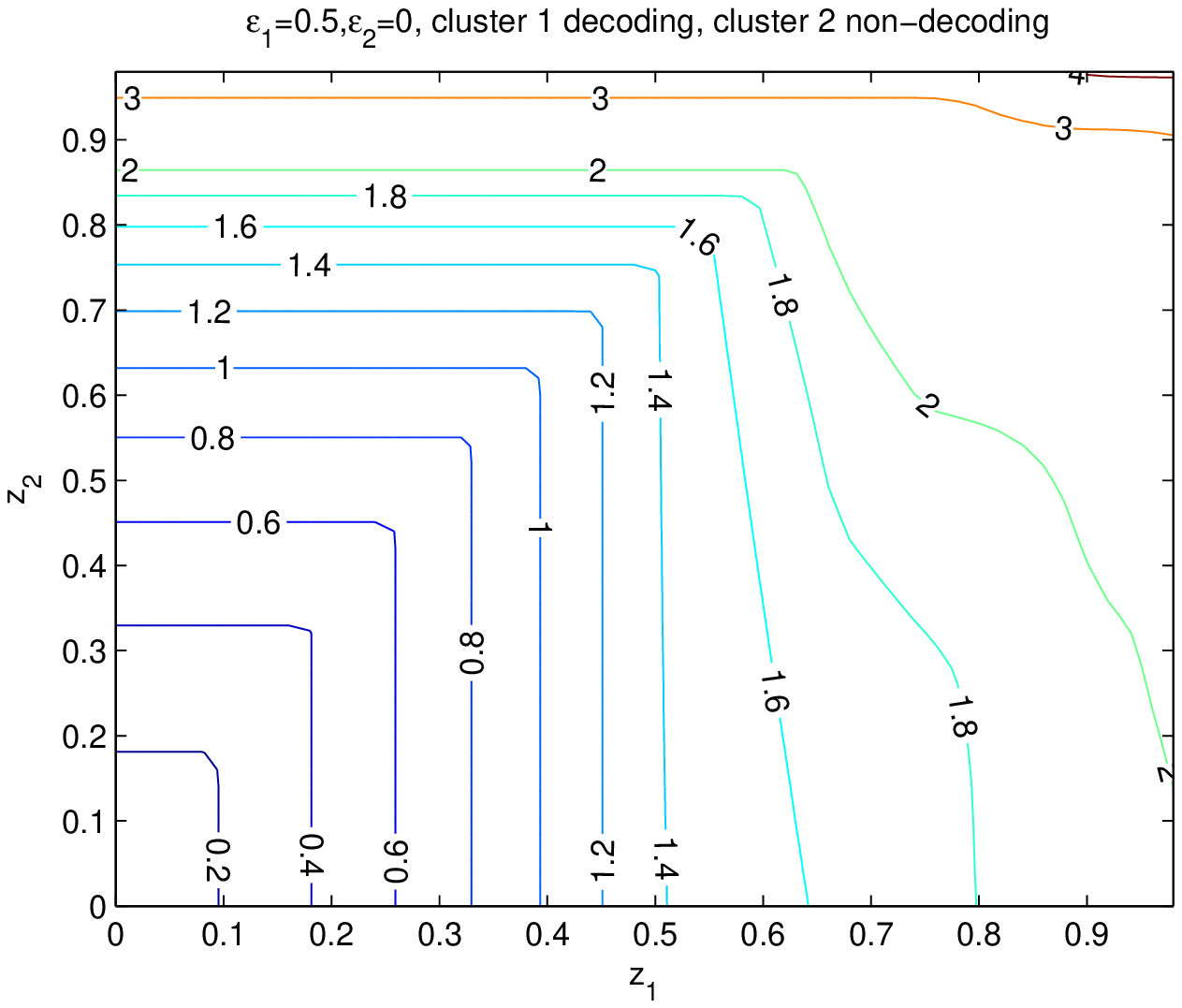}}
 \caption{Contour graphs of the
numerical lower bounds of the min-max latency. Contours define the
outer bounds of achievable $(z_1,z_2)$ regions given a specific
number of transmitted output symbols. Drawn from the solution to
Problem (\ref{opt_gen_lat_fair0}). } \label{fig:fairlatencyregion}
\end{figure}

\begin{itemize}
\item Dense contour regions indicate the regions where the minimized maximum latency is sensitive in $z_1$ or $z_2$;
\item Vertical(or horizontal) contour sections indicate regions where $z_1$(or $z_2$) is
the bottleneck of latency;
\item Steep(or gradual) contour sections indicate $z_1$(or $z_2$)-dominant regions:
reducing $z_1$(or $z_2$) a bit trades for a bigger advance in
$z_2$(or $z_1$) for fixed min-max latency. These are the regions
where the degree distribution of the LT encoder could be finely
tuned for the two clusters to finish reception at the same time.
\end{itemize}

Figures \ref{fig:fairutilregion}\subref{subfig:fairutia} and
\ref{fig:fairutilregion}\subref{subfig:fairutib} show respectively
the contour graphs of the outer bounds of the max-min channel
utilization when both sink clusters are decoding and when one
cluster is decoding but the other is not.

\begin{figure}[htbp]%
\centering
\subfigure[]{\label{subfig:fairutia}\includegraphics[scale=0.45]{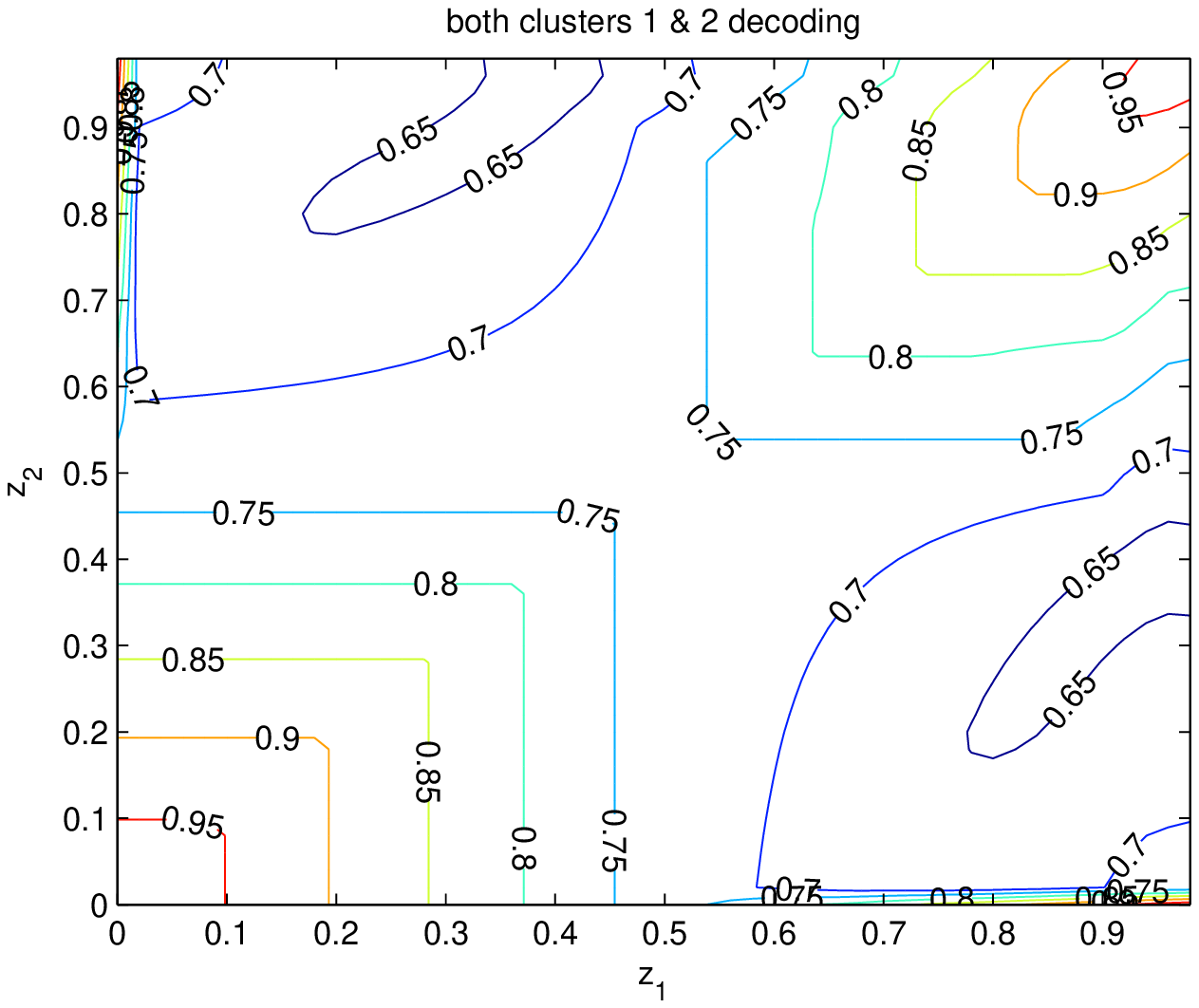}}\qquad
\subfigure[]{\label{subfig:fairutib}\includegraphics[scale=0.45]{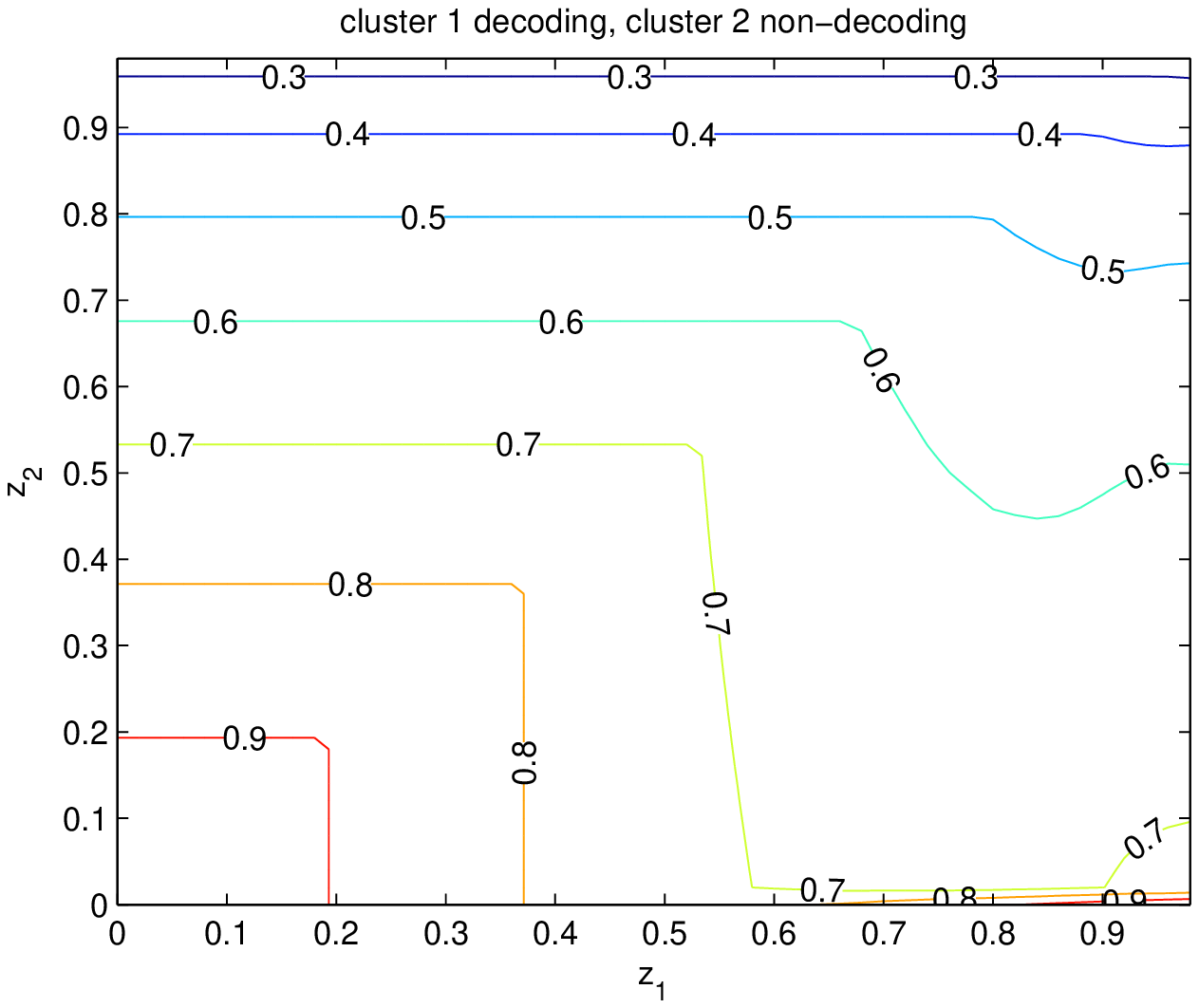}}
 \caption{Contour graphs of the
achievable max-min channel utilization. Drawn from the solution to
Problem (\ref{opt_gen_uti_fair}). } \label{fig:fairutilregion}
\end{figure}

\begin{itemize}
\item
The results are irrelevant to the channel quality;
\item Both clusters are decoding (Figure
\ref{fig:fairutilregion}\subref{subfig:fairutia}):
    \begin{itemize}
    \item For uniform demand $z=z_1=z_2$, channel utilization is the same as the slope of the outbound curve in the $z-r$ plot in \cite{sanghavi}: lowest as $z$ approaches $0.5$ and highest when $z$ is near $0$ or $1$.
    \item For non-uniform demand however diverse, the max-min channel utilization is better than $64\%$;
    \end{itemize}
\item Cluster 1 is decoding while cluster 2 is not(Figure
\ref{fig:fairutilregion}\subref{subfig:fairutib}):
    \begin{itemize}
    \item Max-min channel utilization decreases with increasing $z_2$;
    \item The ``lowest in the middle'' phenomenon could still be observed when $z_2$ is
    small;
    \item The minimum channel utilization could drop below $40\%$.
    \end{itemize}
\end{itemize}

For the results of maximizing the minimum throughput, we choose to
show the outbounds of the optimal solutions for $z_1=z_2$ under
different channel and decoding conditions in Figure
\ref{fig:fairthrouregion}.

\begin{figure}[htbp]%
\centering
\includegraphics[scale=0.4]{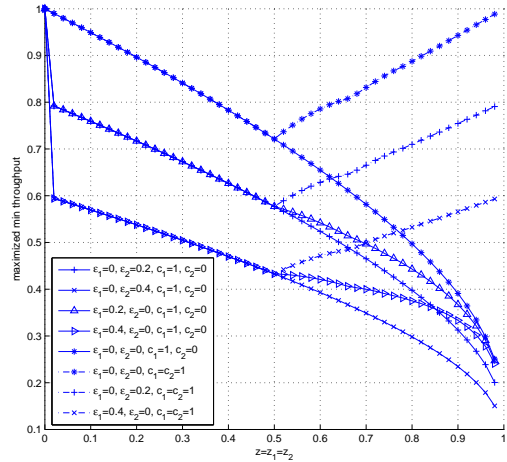}
 \caption{
Max-min
throughput versus $z=z_1=z_2$ under various channel conditions.
Drawn from the solution to Problem (\ref{opt_gen_throu_fair}). }
\label{fig:fairthrouregion}
\end{figure}


As shown in Figure
\ref{fig:fairthrouregion}
\begin{itemize}
\item The max-min throughput cannot go over the capacity of the worse channel, as expected;
\item The curves for both clusters decoding in different channel conditions are almost parallel and similar to the trend of the channel utilization,
which is also expected because of the uniform demand assumed here;
\item The curves for cluster 1 decoding and cluster 2 non-decoding
is always dropping with the growth of $z$; however, when the demand
is not uniform, when $z_2$ is small enough and $z_1$ large enough,
an increase in throughput could still be observed;
\item  The distance
between the outerbound max-min throughput curves for one cluster
decoding and the other not becomes smaller as $z=z_1=z_2$ grows
larger, which implies the less sensitivity of the optimized minimum
throughput to channel conditions when $z$ is larger.
\end{itemize}

Figure \ref{fig:avglat} shows the solution to Problem
(\ref{opt_gen_avg_lat}), minimizing the average latency.

\begin{figure}[htbp]%
\centering
\subfigure[Minimum achievable latency of the decoding
cluster vs. $p_1$ over perfect
channel]{\label{subfig:avgthroua}\includegraphics[scale=0.45]{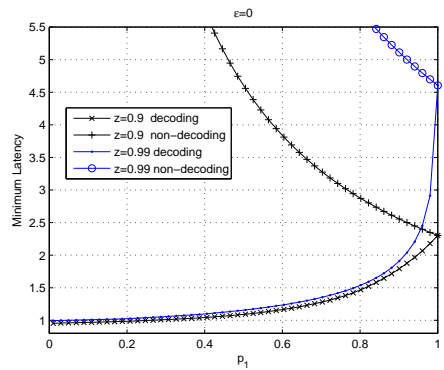}}\qquad
\subfigure[Minimum average latency and achieving $p_1$ versus size
of decoding cluster,
$z_1=z_2$.]{\label{subfig:avgthrouc}\includegraphics[scale=0.45]{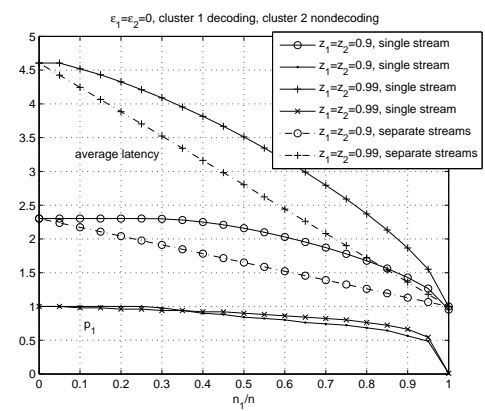}}\qquad
\subfigure[Minimum average latency contour
graph for half-half decoding-non-decoding]{\label{subfig:avgthroud}\includegraphics[scale=0.45]{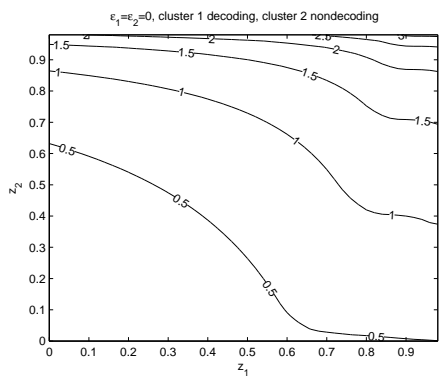}}\\
 \caption{Results for minimizing average latency, Problem
(\ref{opt_gen_avg_lat}). } \label{fig:avglat}
\end{figure}

\begin{itemize}
\item As shown in Figure \ref{fig:avglat}\subref{subfig:avgthroua}, on a perfect channel, even when half of the output symbols are of
degree-1, a decoding sink may be able to decode 99\% of all the
information symbols with an overhead of less than 16\% of the size
of the set of information symbols;
\item As shown in \ref{fig:avglat}\subref{subfig:avgthroud}, as the portion of decoding cluster increases from 0 to 1, the fraction of
degree-1 output symbols in the optimized degree distribution
gracefully decreases from 1 to 0.
\end{itemize}


\subsection{Comparison of Performance}
Table \ref{tab:lat_compare} lists a comparison of the total number
of transmitted symbols to fulfill the demands of two clusters under
four streaming schemes:
\begin{itemize}
\item Scheme A0: The source sends a single stream to all sinks,
minimizing max latency.
\item Scheme A1: The source sends a single stream to all sinks,
minimizing latency of cluster 1.
\item Scheme A2: The source sends a single stream to all sinks,
minimizing latency of cluster 2.
\item Scheme A12:  The source sends two independent streams to the
clusters, each minimizing latency of the targeted cluster.
\end{itemize}

\begin{table}[htbp]
 \caption{Comparison of Total Number of Transmitted
Symbols Under Four Streaming Schemes}
\begin{tabular}{|c|c|c|c|c|c|}
  \hline
   &$(z_1,c_1,\epsilon_1)$  & \multicolumn{2}{|c|}{Scheme A0}  & \multicolumn{2}{|c|}{Scheme A1}    \\
\cline{3-6}
   &$(z_2,c_2,\epsilon_2)$  & either & total & either & total \\
\hline
  cluster 1&(0.98,1,0)               & 1.5634 & \textbf{1.5634} & 0.9914  & $\mathbf{\infty}$  \\
  cluster 2&(0.72,0,0)               & 1.5634 &                 & $\infty$ &                  \\
  \hline
  cluster 1&(0.98,1,0)               & 1.6220 & \textbf{1.6220} & 0.9914 & \textbf{1.9828}  \\
  cluster 2&(0.63,1,0.5)               & 1.6220 &                 & 1.9828 &  \\
  \hline
   &$(z_1,c_1,\epsilon_1)$ & \multicolumn{2}{|c|}{Scheme A2}   & \multicolumn{2}{|c|}{Scheme A12}   \\
\hline
  cluster 1&(0.98,1,0)               & 3.9120 &\textbf{3.9120} & 0.9914 & \textbf{2.2644} \\
  cluster 2&(0.72,0,0)               & 1.2730 &               & 1.2730 &   \\
  \hline
  cluster 1&(0.98,1,0)               & 1.9959 & \textbf{1.9959} & 0.9914 & \textbf{2.5696} \\
  cluster 2&(0.63,1,0.5)             & 1.5782 &                    & 1.5782 &  \\
  \hline

\end{tabular}\label{tab:lat_compare}


\end{table}

Scheme A0 performs significantly better than Schemes A1, A2 and A12
in terms of the total number of output symbols transmitted by the
source.

When considering average latency for multicasting to both decoding
and non-decoding clusters on perfect channels, however, it could be
seen from Figure \ref{fig:avglat}\subref{subfig:avgthrouc} that
transmitting separately encoded streams on separate channels(Scheme
A12) is better than transmitting a single stream(Scheme A0).

\section{Finite-Length Simulation}   \label{sec:simulation}

Figure \ref{fig:sim}\subref{subfig:simcurve} gives the simulated
sample curves of information recovery versus latency when the
decoding cluster targets at recovering 80\% of the input symbols and
the non-decoding cluster targets at recovering 40\%. The
distribution of the latency till the two clusters achieve targeted
information recovery is given in
\ref{fig:sim}\subref{subfig:simhist}. The empirical average value of
$t_0$ is $1.0718$, $2.3\%$ greater than the optimization result
$t_0^*=1.0473$, which is in acceptable error range.

\begin{figure}[htb]
\centering
\subfigure{\label{subfig:simcurve}\includegraphics[scale=0.55]{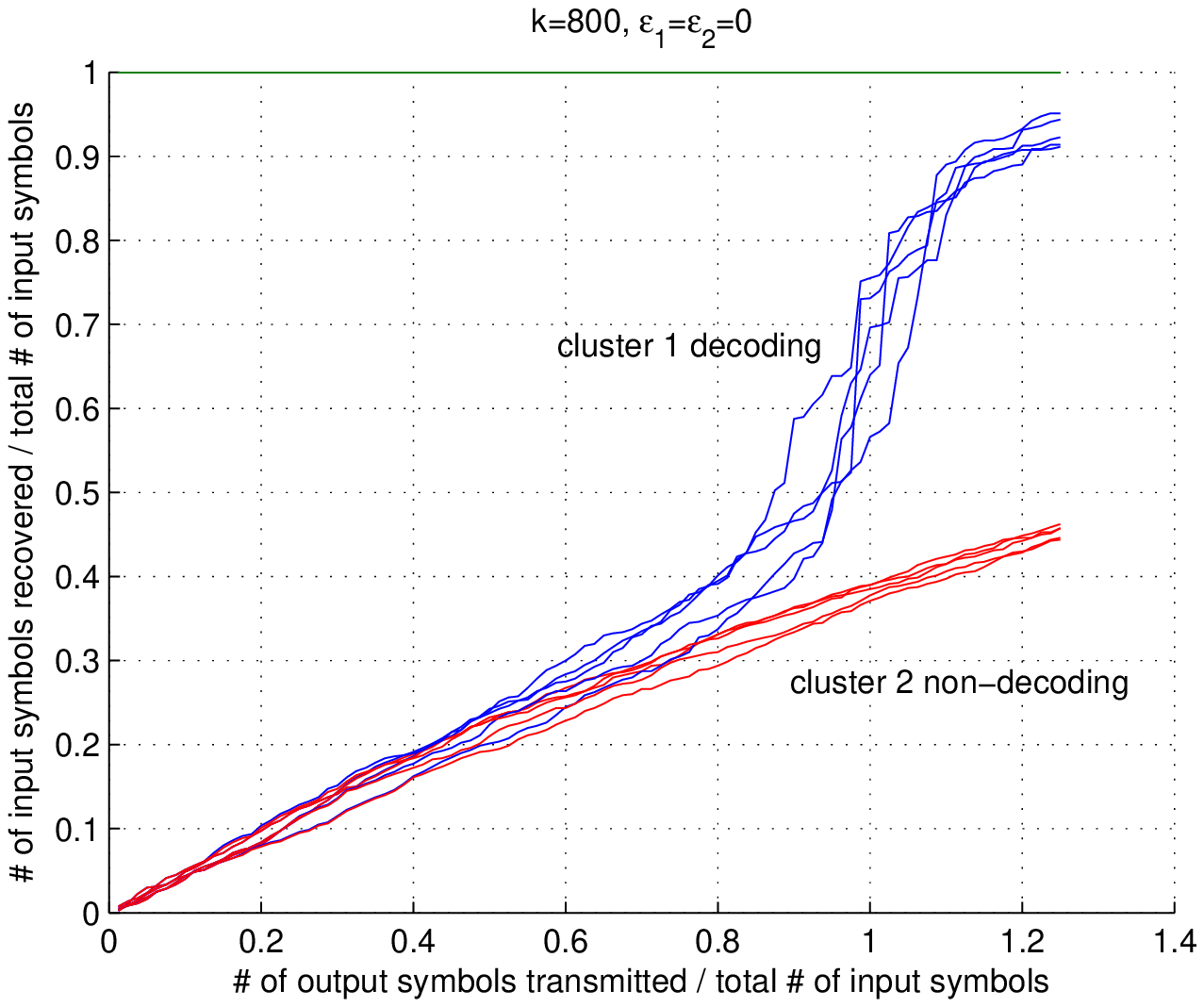}}\qquad
\subfigure{\label{subfig:simhist}\includegraphics[scale=0.55]{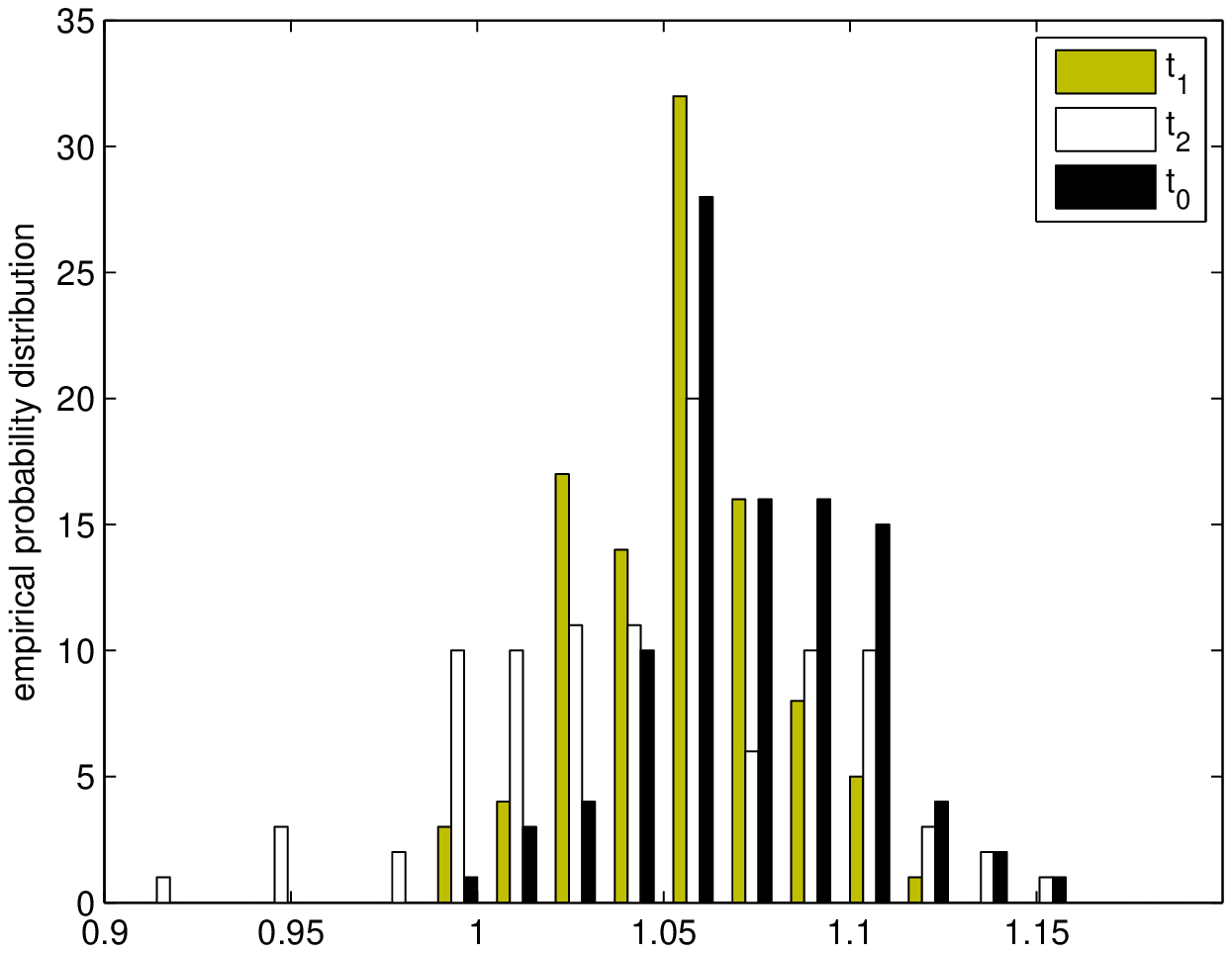}}
\caption{\subref{subfig:simcurve}Finite-length simulated time
progress of information recovery for degree distribution
$P(x)=0.4878x+0.4878x^4+0.0244x^5$, optimized for min-max latency
and $z_1=0.8$, $z_2=0.4$, $\epsilon_1=\epsilon_2=0$, the number of
information symbols being $k=800$. 5 simulation instances plotted.
\subref{subfig:simhist}Empirical probability distribution of latency
$t_1$ and $t_2$, obtained from 100 samples; mean of $t_1$ is
$1.0532$, standard deviation $0.0263$; mean of $t_2$ is $1.0451$,
standard deviation $0.0443$; mean of $t_0=\max\{t_1,t_2\}$ is
1.0718, standard deviation 0.0300. Optimization results give that
for $(z_1, z_2)=(0.8,0.4)$, min-max latency is $t_0^*=1.0473$.}
\label{fig:sim}
\end{figure}

\newpage
\section{Concluding Remarks}
    In this work, we have investigated the performance of LT rateless codes for streaming from a single server to diverse
    users. The degree distributions of the LT-output
    symbols have been optimized
    according to network parameters.
    The degree distribution optimization problems have been
    formulated in the asymptotic regime and solved numerically, and simulations have been conducted to confirm the usability of the
    asymptotic results as a guideline for finite-length code design. The impact of diversity in channel conditions, non-uniform demands
and coding methods of users on transmission latency, channel
utilization and throughput have also been shown through the
optimization results. As demonstrated in Section \ref{sec:results},
following our scheme, the total bandwidth consumption for satisfying
diverse users is considerably reduced compared to either sending
separate streams for different users or sending a stream that is
optimized for only one of the users.
\bibliographystyle{unsrt}
\bibliography{ftrefs}

\end{document}